\newcommand{\SarielComp}[1]{}
\newcommand{\NotSarielComp}[1]{#1}%
\newcommand{\SarielComp}[1]{#1}%
\newcommand{\NotSarielComp}[1]{}%
\newcommand{\IfPrinterVer}[2]{#2}%
\newcommand{\UsePackage}[1]{%
  \IfFileExists{../styles/#1.sty}{%
      \usepackage{../styles/#1}%
   }{%
      \IfFileExists{./styles/#1.sty}{%
         \usepackage{styles/#1}%
      }{%
         \usepackage{#1}%
      }%
   }%
}
\newlist{compactenumA}{enumerate}{5}%
\setlist[compactenumA]{topsep=0pt,itemsep=-1ex,partopsep=1ex,parsep=1ex,%
   label=(\Alph*)}%
\newlist{compactenumi}{enumerate}{5}%
\setlist[compactenumi]{topsep=0pt,itemsep=-1ex,partopsep=1ex,parsep=1ex,%
   label=(\roman*)}%
\providecommand{\BibLatexMode}[1]{}
\providecommand{\BibTexMode}[1]{#1}
  \renewcommand{\BibLatexMode}[1]{}
  \renewcommand{\BibTexMode}[1]{#1}
  \renewcommand{\BibLatexMode}[1]{#1}
  \renewcommand{\BibTexMode}[1]{}
\definecolor{blue25}{rgb}{0,0,0.7}
\providecommand{\emphic}[2]{%
   \textcolor{blue25}{%
      \textbf{\emph{#1}}}%
   \index{#2}}
\providecommand{\emphi}[1]{\emphic{#1}{#1}}
\theoremstyle{plain}%
\newtheorem{theorem}{Theorem}[section]
\newtheorem{lemma}[theorem]{Lemma}
\newtheorem{corollary}[theorem]{Corollary}
\theoremstyle{plain}%
\newtheorem*{remark:unnumbered}[theorem]{Remark}%
\newtheorem{remark}[theorem]{Remark}%
\newtheorem{definition}[theorem]{Definition}
\newcommand{\myqedsymbol}{\rule{2mm}{2mm}}
\theoremstyle{nonumberplain}%
\newtheorem{proof}{Proof:}%
\newcommand{\atgen}{\symbol{'100}}
\newcommand{\SarielThanks}[1]{\thanks{Department of Computer Science;
      University of Illinois; 201 N. Goodwin Avenue; Urbana, IL,
      61801, USA; {\tt sariel\atgen{}illinois.edu}; {\tt
         \url{http://sarielhp.org/}.} #1}}
\newcommand{\MitchellThanks}[1]{%
   \thanks{%
      Department of Computer Science;
      University of Illinois; 201 N. Goodwin Avenue; Urbana, IL,
      61801, USA; {\tt mfjones2\atgen{}illinois.edu}; {\tt
         \url{http://mfjones2.web.engr.illinois.edu/}.} #1}}
\numberwithin{figure}{section}%
\numberwithin{table}{section}%
\numberwithin{equation}{section}%
\newcommand{\HLink}[2]{\hyperref[#2]{#1~\ref*{#2}}}
\newcommand{\HLinkSuffix}[3]{\hyperref[#2]{#1\ref*{#2}{#3}}}
\newcommand{\figlab}[1]{\label{fig:#1}}
\newcommand{\figref}[1]{\HLink{Figure}{fig:#1}}
\newcommand{\tbllab}[1]{\label{table:#1}}
\newcommand{\tblref}[1]{\HLink{Table}{table:#1}}
\newcommand{\thmlab}[1]{{\label{theo:#1}}}
\newcommand{\thmref}[1]{\HLink{Theorem}{theo:#1}}
\newcommand{\corlab}[1]{\label{cor:#1}}
\newcommand{\corref}[1]{\HLink{Corollary}{cor:#1}}%
\newcommand{\remlab}[1]{\label{rem:#1}}
\newcommand{\remref}[1]{\HLink{Remark}{rem:#1}}%
\newcommand{\lemlab}[1]{\label{lemma:#1}}
\newcommand{\lemref}[1]{\HLink{Lemma}{lemma:#1}}%
\providecommand{\eqlab}[1]{}%
\renewcommand{\eqlab}[1]{\label{equation:#1}}
\newcommand{\Set}[2]{\left\{ #1 \;\middle\vert\; #2 \right\}}
\newcommand{\pth}[2][\!]{\mleft({#2}\mright)}%
\newcommand{\cardin}[1]{\left| {#1} \right|}%
\newcommand{\tldO}{\scalerel*{\widetilde{O}}{j^2}}%
\renewcommand{\Re}{\mathbb{R}}%
\definecolor{OliveGreen}{cmyk}{0.64, 0, 0.95, 0.40 }
\newcommand{\etal}{\textit{et~al.}\xspace}
\providecommand{\Mh}[1]{#1}
\newcommand{\eps}{\varepsilon}
\newcommand{\R}{\mathbb{R}}
\newcommand{\PS}{\Mh{P}}%
\newcommand{\QS}{\Mh{Q}}%
\newcommand{\US}{\Mh{U}}
\newcommand{\SSet}{S}
\newcommand{\pa}{\Mh{p}}%
\newcommand{\pb}{\Mh{q}}%
\newcommand{\ps}{\Mh{s}}%
\newcommand{\pu}{\Mh{u}}%
\newcommand{\pv}{\Mh{v}}%
\newcommand{\Lines}{L}
\newcommand{\ArrX}[1]{\mathcal{A}\pth{#1}}
\newcommand{\Disks}{\mathcal{D}}
\newcommand{\Rects}{\mathcal{R}}
\newcommand{\Rect}{R}
\newcommand{\TX}[1]{\mathsf{t}\pth{#1}}
\newcommand{\BX}[1]{\mathsf{b}\pth{#1}}
\newcommand{\LX}[1]{\mathsf{l}\pth{#1}}
\newcommand{\RX}[1]{\mathsf{r}\pth{#1}}
\newcommand{\disk}{\mathsf{d}}
\newcommand{\diskB}{\mathsf{q}}
\newcommand{\rad}{r}
\newcommand{\cen}{\mathsf{c}}
\newcommand{\Graph}{G}
\newcommand{\Vertices}{V}
\newcommand{\Edges}{E}
\newcommand{\PX}[1]{{{#1}^+}}
\newcommand{\MX}[1]{{{#1}^-}}
\newcommand{\match}{M}
\newcommand{\ple}{\prec}
\newcommand{\POX}[1]{\pth{#1, \prec}}
\newcommand{\CC}{\mathcal{C}}
\newcommand{\ac}{D}%
\newcommand{\DS}{\mathcalb{D}}
\newcommand{\DSX}[1]{\DS\pth{#1}}
\newcommand{\edistY}[2]{\left\| #1 - #2 \right\|}
\newcommand{\wdistX}[1]{\delta_{#1}}
\newcommand{\wdistY}[2]{\wdistX{#1}\pth{#2}}
\newlength{\savedparindent}
\newcommand{\SaveIndent}{\setlength{\savedparindent}{\parindent}}
\newcommand{\RestoreIndent}{\setlength{\parindent}{\savedparindent}}
\newcommand{\Term}[1]{\textsf{#1}}
\newcommand{\BFS}{\Term{BFS}\xspace}%
\newcommand{\DAG}{\Term{DAG}\xspace}%
\title{%
   Some Geometric Applications of Anti-Chains
}%
\date{\today}
\author{%
   Sariel Har-Peled%
   \SarielThanks{}%
   \and%
   Mitchell Jones%
   \MitchellThanks{}%
}
\begin{document}

\maketitle


\begin{abstract}
    We present an algorithmic framework for computing anti-chains of
    maximum size in geometric posets. Specifically, posets in which
    the entities are geometric objects, where comparability of two
    entities is implicitly defined but can be efficiently
    tested. Computing the largest anti-chain in a poset can be done in
    polynomial time via maximum-matching in a bipartite graph, and
    this leads to several efficient algorithms for the following
    problems, each running in (roughly) $O(n^{3/2})$ time:
    \begin{compactenumA}
        \item Computing the largest Pareto-optimal subset of a set of $n$
        points in $\Re^d$.

        \smallskip%
    	\item Given a set of disks in the plane, computing the largest
    	subset of disks such that no disk contains another. This is
    	quite surprising, as the independent version of this problem
    	is computationally hard.

        \smallskip%
        \item Given a set of axis-aligned rectangles, computing the
        largest subset of non-crossing rectangles.
    \end{compactenumA}
\end{abstract}


\section{Introduction}


\paragraph*{Partial orderings.}

Let $\POX{\Vertices}$ be a \emphi{partially ordered set} (or a poset),
where $\Vertices$ is a set of entities.  An \emphi{anti-chain} is a
subset of elements $\ac \subseteq \Vertices$ such that all pairs of
elements in $\ac$ are incomparable in $\POX{\Vertices}$. A \emphi{chain}
is a subset $C \subseteq V$ such that all pairs of elements in $C$
are comparable. A chain \emph{cover} $\CC$ is a
collection of chains whose union covers $\Vertices$. Observe that any
anti-chain can contain at most one element from any given chain. As
such, if $\CC$ is the smallest collection of chains covering
$\Vertices$, then for any anti-chain $\ac$, $\cardin{\ac} \leq
\cardin{\CC}$. Dilworth's Theorem \cite{d-dtpos-50} states that the
minimum number of chains whose union covers $\Vertices$ is equal to
the anti-chain of maximum size.


\paragraph*{Implicit posets arising from geometric problems.}

We are interested in implicitly defined posets, where the elements of
the poset are geometric objects. In particular, if one can compute the
largest anti-chain in these implicit posets, we obtain algorithms
solving natural geometric problems.  To this end, we describe a
framework for computing anti-chains in an implicitly defined poset
$\POX{\Vertices}$, under the following two assumptions:
\begin{enumerate*}[label=(\roman*)]
    \item comparability of two elements in the poset can be
    efficiently tested, and
    \item given an element $v \in \Vertices$, one can quickly find an
    element $u \in \Vertices$ with $v \ple u$.
\end{enumerate*}

As an example, let $\PS$ be a set of $n$ points in the plane. Form the
partial ordering $\POX{\PS}$, where $\pb \ple \pa$ if $\pa$ dominates
$\pb$. One can efficiently test comparability of two points, and given
$\pb$, can determine if it is dominated by a point $\pa$ by reducing
the problem to an orthogonal range query. Observe that an anti-chain
in $\POX{\PS}$ corresponds to a collection of points in which no point
dominates another. The largest such subset can be computed efficiently
by finding the largest anti-chain in $\POX{\PS}$, see \lemref{pareto}.

\paragraph*{Previous work.}

Posets have been previously utilized and studied in computational
geometry \cite{sk-gap-98,fw-mkcpps-98,mw-gscpt-92}. For the poset
$\POX{\PS}$ described above, Felsner and Wernisch \cite{fw-mkcpps-98}
study the problem of computing the largest subset of points which can
be covered by $k$-antichains.


\paragraph*{Our results.}

We describe a general framework for computing anti-chains in posets
defined implicitly, see \thmref{compute:ac:ds}.  As a consequence, we
have the following applications: \medskip%
\begin{compactenumA}
    \item \textsf{Largest Pareto-optimal subset.} %
    Let $\PS \subseteq \R^d$ be a set of $n$ points. A point $\pa \in
    \R^d$ \emphi{dominates} a point $\pb \in \R^d$ if $\pa \geq \pb$
    coordinate wise. Compute the largest subset of points $\SSet
    \subseteq \PS$, so that no point in $\SSet$ dominates any other
    point in $\SSet$. In two dimensions this corresponds to computing
    the longest downward ``staircase'', which can be done in $O( n\log
    n)$ time (our algorithm is not interesting in this case).
    However, for three and higher dimensions, it corresponds to a
    surface of points that form the largest Pareto-optimal subset of
    the given point set.

    \smallskip%
    \item \textsf{Largest loose subset.} %
    Let $\Disks$ be a set of $n$ regions in $\R^d$. A subset
    $\SSet \subseteq \Disks$ is \emphi{loose} if for every pair
    $\disk_1, \disk_2 \in S$, $\disk_1 \not\subseteq \disk_2$ and
    $\disk_2 \not\subseteq \disk_1$. This is a weaker concept than
    \emph{independence}, which requires that no pair of objects
    intersect. Surprisingly, computing the largest loose set
    can be done in polynomial time, as it reduces to finding the
    largest anti-chain in a poset. Compare this to the independent
    set problem, which is NP-Hard for all natural shapes in the
    plane (triangles, rectangles, disks, etc).

    \smallskip%
    \item \textsf{Largest subset of non-crossing rectangles.} %
    Let $\Rects$ be a set of $n$ axis-aligned rectangles in the plane.
    Compute the largest subset of rectangles $\SSet \subseteq \Rects$,
    such that every pair of rectangles in $\SSet$ intersect at most
    twice. Equivalently, $\SSet$ is \emphi{non-crossing}, or
    $\SSet$ forms a collection of pseudo-disks.

    \smallskip%
    \item \textsf{Largest isolated subset of points.} %
    Let $\Lines$ be a collection lines in the plane (not necessarily
    in general position), and let $\PS$ be a set of points lying on
    the lines of $\Lines$. A point $\pa \in \PS$ can \emphi{reach} a
    point $\pb \in \PS$ if $\pa$ can travel from left to right, along
    the lines of $\Lines$, to $\pb$. A subset of points $\QS \subseteq
    \PS$ are \emphi{isolated} if no point in $\QS$ can reach any other
    point in $\QS$ using the lines $L$, see \figref{isolated:ex}.
\end{compactenumA}
\medskip%

\begin{figure}
  \hfill%
  \includegraphics[page=2,scale=0.5]{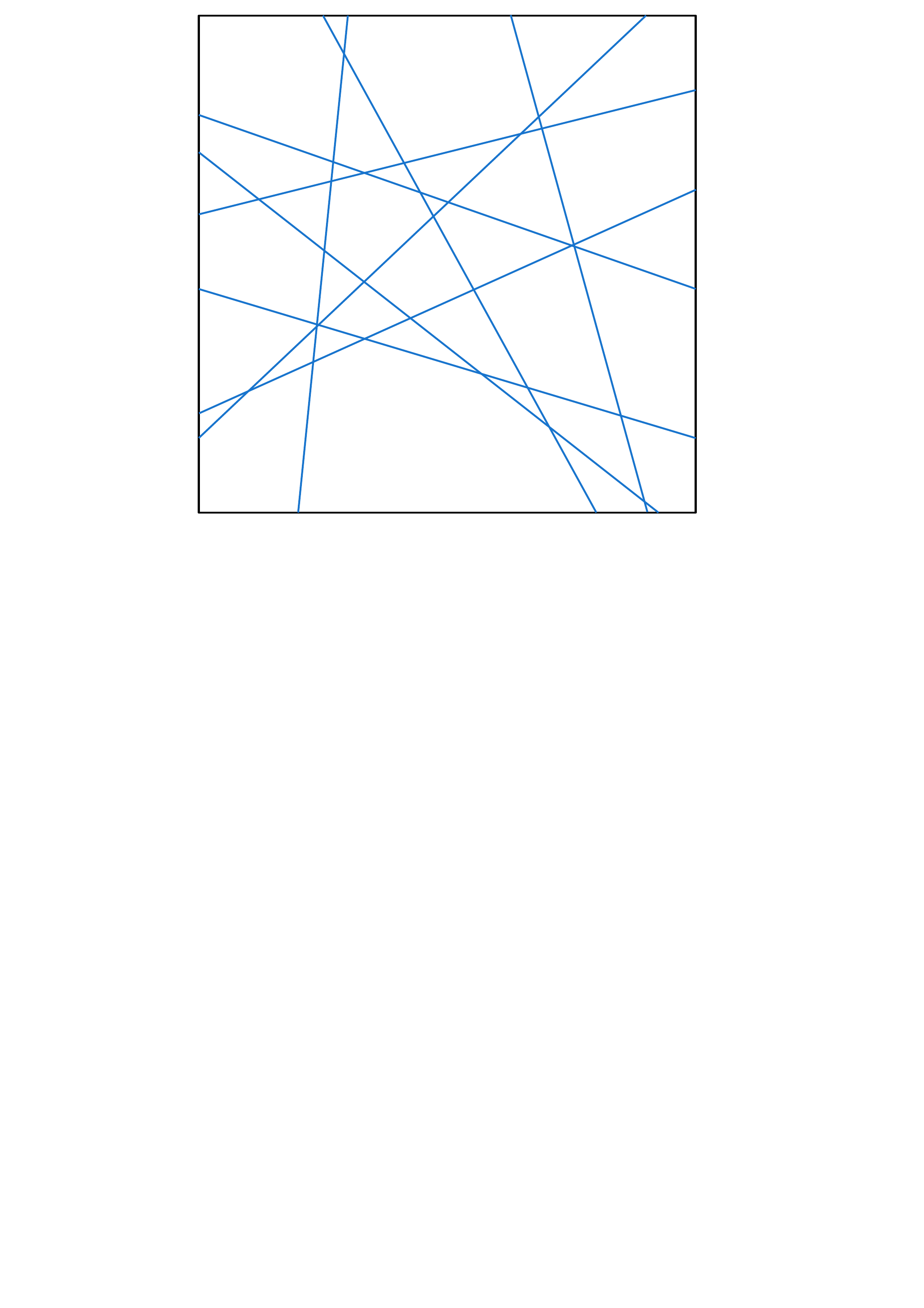}%
  \hfill%
  \includegraphics[page=3,scale=0.5]{figs/isolated}%
  \hfill%
  \includegraphics[page=4,scale=0.5]{figs/isolated}%
  \hfill\phantom{}%
  \caption{Left: A set of points on lines. Middle:
  A point reaching another. Right: An isolated subset.}
  \figlab{isolated:ex}
\end{figure}

Our results are summarized in \tblref{results}.

\begin{table}[t]
  \footnotesize
  \centering
  \begin{tabular}{c|c|c|c}%
    \hline%
    Computing largest subset of
    & Entities
    & Running time
    & Ref\\
    \hline
    \hline
    Pareto-optimal
    & Points in $\R^d$, $d>2$
    & $\tldO(n^{1.5})\Bigr.$
    & \lemref{pareto}
    \\
    \hline
    \hline
    Loose
    & Arbitrary regions in $\Re^d$
    & $O(n^{2.5})\Bigr.$
    & \lemref{loose:any}
    \\%
    \cline{2-4}
    & \begin{minipage}{0.45\linewidth}
        \smallskip%
        \centering
        Arbitrary regions in $\Re^d$ with a
        dynamic range searching data structure,
        $Q(n)$ time per operation
      \end{minipage}
    & $O(n^{1.5}Q(n))\Bigr.$
    & \lemref{loose:ds}
    \\%
    \cline{2-4}
    & Disks in the plane
    & $\tldO(n^{1.5})\Bigr.$
    & \corref{loose:disks}
    \\%
    \hline%
    \hline%
    Non-crossing  %
    & Axis aligned rectangles in $\Re^2$
    & $\tldO(n^{1.5})\Bigr.$
    & \lemref{noncrossing}
    \\%
    \hline
    \hline
    Isolated
    & Points on lines in $\R^2$
    & $O(n^3)$
    & \lemref{isolated}
    \\%
    \hline
  \end{tabular}%
  \caption{Our results, where $\widetilde{O}$ hides factors of the
  form $\log^c n$ ($c$ may depend on $d$).}
  \tbllab{results}
\end{table}


\section{Framework}


\subsection{Computing anti-chains}

The following is a constructive proof of Dilworth's Theorem from the
max-flow min-cut Theorem, and is of course well known \cite{s-co-03}.
We provide a proof for the sake of completeness.

\begin{lemma}
    \lemlab{compute:ac} Let $\POX{\Vertices}$ be a poset. Assume that
    comparability of two elements can be checked in $O(1)$ time. Then
    a maximum size anti-chain in $\POX{\Vertices}$ can be computed in
    $O(n^{2.5})$ time.
\end{lemma}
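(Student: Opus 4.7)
The plan is to follow the classical reduction of minimum chain cover to bipartite matching, and then use König's theorem to extract an anti-chain of the matching size. Specifically, construct a bipartite graph $\Graph = (\PX{\Vertices} \cup \MX{\Vertices}, \Edges)$, where $\PX{\Vertices}$ and $\MX{\Vertices}$ are two copies of $\Vertices$, and $(\PX{v}, \MX{u}) \in \Edges$ iff $v \ple u$ in $\POX{\Vertices}$. Since comparability is tested in $O(1)$ time and there are $O(n^2)$ candidate pairs, the graph is built in $O(n^2)$ time and has $O(n^2)$ edges.

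Next, compute a maximum matching $\match$ in $\Graph$ using the Hopcroft--Karp algorithm, which runs in $O(\cardin{\Edges}\sqrt{\cardin{\Vertices}}) = O(n^{2.5})$ time. Interpret each matched edge $(\PX{v}, \MX{u}) \in \match$ as declaring that $u$ is a successor of $v$ in some chain. Because each vertex of $\Vertices$ appears at most once on the $\PX{}$-side and at most once on the $\MX{}$-side of $\match$, these directives stitch together a decomposition of $\Vertices$ into exactly $\cardin{\CC} = n - \cardin{\match}$ chains, each chain being monotone in $\ple$ by transitivity of the partial order.

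It remains to produce an anti-chain $\ac$ of the same size. Apply König's Theorem to $\match$ to obtain a minimum vertex cover $K \subseteq \PX{\Vertices} \cup \MX{\Vertices}$ of size $\cardin{\match}$. Define
\[
    \ac = \Set{v \in \Vertices}{\PX{v} \notin K \text{ and } \MX{v} \notin K}.
\]
If $v, u \in \ac$ with $v \ple u$, then $(\PX{v}, \MX{u}) \in \Edges$ is uncovered by $K$, contradicting that $K$ is a vertex cover; so $\ac$ is indeed an anti-chain. A double counting on the $2n$ copies shows $\cardin{\ac} \geq 2n - 2\cardin{K} - \text{(mismatch)}$, and one checks that $\cardin{\ac} = n - \cardin{\match} = \cardin{\CC}$. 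Since every anti-chain has size at most $\cardin{\CC}$ (no anti-chain intersects a chain in more than one element), both $\ac$ and $\CC$ are optimal.

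The only nontrivial part is bookkeeping for the König step: extracting $K$ from the alternating-path forest rooted at unmatched vertices of $\PX{\Vertices}$, and verifying the size identity $\cardin{\ac} = n - \cardin{\match}$. Everything else (graph construction, Hopcroft--Karp, chain reconstruction) is routine, and the overall running time is dominated by the matching step at $O(n^{2.5})$.
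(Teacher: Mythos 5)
Your proof is correct and takes essentially the same approach as the paper: the bipartite graph construction, the Hopcroft--Karp matching, the chain decomposition giving $\cardin{\CC} = n - \cardin{\match}$, and the anti-chain obtained from alternating paths rooted at unmatched vertices are all identical, the only packaging difference being that you invoke K\H{o}nig's theorem to name the vertex cover $K$ rather than, as the paper does, deriving the same set directly from the max-flow min-cut interpretation. The one spot where you should tighten the writeup is the size accounting: the stated inequality $\cardin{\ac} \geq 2n - 2\cardin{K} - \text{(mismatch)}$ is not the right form; the clean statement is $\cardin{\ac} = n - \cardin{\brc{v : \PX{v} \in K \text{ or } \MX{v} \in K}} \geq n - \cardin{K} = n - \cardin{\match}$, which combined with the trivial upper bound $\cardin{\ac} \leq \cardin{\CC} = n - \cardin{\match}$ forces equality and optimality.
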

\begin{proof}
    Given $\POX{\Vertices}$, construct the bipartite graph
    $\Graph = (\US, \Edges)$, where
    $\US = \MX{\Vertices} \cup \PX{\Vertices}$ and
    $\MX{\Vertices}$, $\PX{\Vertices}$ are copies of $\Vertices$. Add
    an edge $(\MX{v}, \PX{u})$ to $\Edges$ when $v \ple u$ in
    $\POX{\Vertices}$. Next, compute the maximum matching in $\Graph$
    using the algorithm of Hopcroft-Karp \cite{hk-ammbg-73}, which
    runs in time
    $O\bigl( \cardin{\Edges} \sqrt{\cardin{\US}} \bigr) = O(n^{2.5})$.
    Let $\match \subseteq \Edges$ be the resulting maximum matching in
    $\Graph$. Define $\MX{\QS} \subseteq \MX{\US}$ as the set of
    unmatched vertices. A path in $\Graph$ is \emph{alternating} if
    the edges of the path alternate between matched and unmatched
    edges. Let $S \subseteq \MX{\Vertices} \cup \PX{\Vertices}$ be the
    set of vertices which are members of alternating paths starting
    from any vertex in $\MX{\QS}$. Finally, set
    $\ac = \Set{v \in \Vertices}{\MX{v} \in S, \PX{v} \not\in S}$.  We
    claim $\ac$ is an anti-chain of maximum size.

    Conceptually, suppose that $\Graph$ is a directed network flow
    graph. Modify $\Graph$ by adding two new vertices $s$ and $t$ and
    add the directed edges $(s, \MX{v})$ and $(\PX{v}, t)$, each with
    capacity one for all $v \in \Vertices$. Finally, direct all edges
    from $\MX{\Vertices}$ to $\PX{\Vertices}$ with infinite
    capacity. By the max-flow min-cut Theorem, the maximum matching
    $\match$, induces a minimum $s$-$t$ cut, which is the cut
    $(s+S, t + \US \setminus S)$, where $S$ is defined above.
    Indeed, $s+S$ is the reachable set from $s$ in the residual
    graph for $\Graph$.
    To see why $\ac$ is an anti-chain, suppose that there
    exist two comparable elements $v, u \in \ac$.  This implies that
    $\MX{v}, \MX{u} \in S$ and $\PX{v}, \PX{u} \notin S$.  Assume without
    loss of generality that $v \ple u$. This implies that
    $(\MX{v}, \PX{u})$ is an edge of the network flow graph $\Graph$
    with infinite capacity that is in the cut
    $(s+S, \Vertices \setminus S + t)$. This contradicts the
    finiteness of the cut capacity.

    \smallskip%

    We next prove that $\cardin{\ac}$ is maximum. Note that an element
    $v \in \Vertices$ is not in $\ac$ if $\MX{v} \notin S$ or
    $\PX{v} \in S$. If $\MX{v} \notin S$ then $(s, \MX{v})$ is in the
    cut. Similarly, if $\PX{v} \in S$ then $(\PX{v}, t)$ is in
    the cut.  Since the minimum cut has capacity $\cardin{M}$, there
    are at most $\cardin{M}$ such vertices, which implies that
    $\cardin{\ac} \geq n - \cardin{M}$.

    On the other hand, a chain cover $\CC$ for $\POX{\Vertices}$ can be
    constructed from $\match$. Given $\POX{\Vertices}$, create
    a \DAG $H$ with vertex set $\Vertices$. We add
    the directed edge $(u, v)$ to $H$ when $u \ple v$.%
    \footnote{Equivalently, $H$ is the transitive closure of the Hasse
    diagram for $\POX{\Vertices}$.}
    Now an edge $(\MX{u}, \PX{v})$ in $\match$ corresponds to the edge
    $(u,v)$ of $H$. As such, a matching corresponds to a collection of
    edges in $H$, where every vertex appears at most twice. Since $H$
    is a \DAG, it follows that $\match$ corresponds to a collection of
    paths in $H$. The end vertex $x$ of such a path corresponds to a
    vertex $\MX{x} \in \MX{\Vertices}$ that is unmatched (as
    otherwise, the path can be extended), and this is the only
    unmatched vertex on this path.
    There are at most $n - \cardin{\match}$ unmatched vertices in
    $\MX{\Vertices}$, which implies that
    $\cardin{\CC} \leq n - \cardin{\match}$. Hence, $\ac$ is an
    anti-chain with
    $\cardin{\ac} \geq n - \cardin{\match} \geq \cardin{\CC}$.
    Additionally, recall that for any anti-chain $\ac'$,
    $\cardin{\ac'} \leq \cardin{\CC}$. These two inequalities imply
    that $\ac$ is of maximum possible size.
\end{proof}

\begin{remark}
    \remlab{compute:chains}%
    As described above, the edges of the matching $\match$ correspond
    to a collection of edges in the \DAG $H$. These edges together
    form a collection of vertex-disjoint paths which cover the
    vertices of $H$, and this is the minimum possible number of paths
    needed to cover the vertices.
\end{remark}


\subsection{Computing anti-chains on implicit posets}

Here, we focus on computing anti-chains in posets, in which
comparability of two elements are efficiently computable. Our main
observation is that one can use range searching data structures to
run the Hopcroft-Karp bipartite matching algorithm faster
\cite{hk-ammbg-73}. This observation goes back to the work of
Efrat \etal \cite{eik-ghbmrp-01}, where they study the problem
of computing a perfect matching $M$ in a weighted bipartite graph $G$
such that the maximum weight edge in $M$ is minimized. They focus on
solving the decision version of the problem: given a parameter $r$, is
there a perfect matching $M$ with maximum edge weight at most $r$?

\begin{theorem}[{{\cite[Theorem 3.2]{eik-ghbmrp-01}}}]
  Let $G$ be a bipartite graph on $n$ vertices with bipartition
  $A \cup B$, and $r > 0$ a parameter.
  For any subset $U \subseteq B$ of $m$ vertices, suppose one can
  construct a data structure $\DSX{B}$ such that:
  \begin{compactenumi}
      \item Given a query vertex $v \in A$, $\DSX{B}$ returns a
      vertex $u \in B$ such that the wight of the edge $(u,v)$ is at
      most $r$ (or reports that no such element in $B$ exists) in
      $T(m)$ time.

      \item An element of $B$ can be deleted from $\DSX{B}$ in
      $T(m)$ time.

      \item $\DSX{B}$ can be constructed in
      $O\bigl(m \cdot T(m)\bigr)$ time.
  \end{compactenumi}
  Then one can decide if there is a perfect matching $M$ in $G$,
  such that all edges in $M$ have weight at most $r$, in
  $O\bigl(n^{1.5} \cdot T(n)\bigr)$ time.
\end{theorem}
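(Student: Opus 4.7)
The plan is to run the Hopcroft--Karp bipartite matching algorithm on the subgraph $G_r \subseteq G$ consisting of the edges of weight at most $r$, but without ever materializing $G_r$ explicitly. Since deciding the existence of a perfect matching of weight at most $r$ is exactly deciding whether $G_r$ has a perfect matching, and since Hopcroft--Karp runs in $O(\sqrt{n})$ phases---each phase performing a \BFS from the unmatched $A$-vertices to compute the layered subgraph of shortest augmenting paths, followed by a DFS that augments along a maximal vertex-disjoint collection of such paths---the whole proof reduces to executing each phase in $O(n \cdot T(n))$ time using only the queries and deletions provided by $\DSX{B}$.

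For the \BFS part of a phase, rebuild one instance of $\DSX{B}$ on the full set $B$, incurring $O(n \cdot T(n))$. Then process layers in order: when expanding an $A$-vertex $v$ at an even level $k$, repeatedly query $\DSX{B}$ for a neighbor of $v$ of weight at most $r$; each returned $u$ is assigned level $k+1$ and immediately deleted from $\DSX{B}$, so no later query can return it again. Transitions from a $B$-vertex at an odd level to its matched partner at the next even level take $O(1)$ by a direct pointer. Every $B$-vertex is returned at most once (at most $n$ successful queries and deletions), and every $A$-vertex issues at most one extra unsuccessful query (the one exhausting its unvisited $G_r$-neighborhood), so the \BFS costs $O(n \cdot T(n))$.

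For the DFS part, construct a separate data structure on the set $B_\ell$ of $B$-vertices at each odd \BFS-level $\ell$; the total construction cost is $\sum_\ell |B_\ell| \cdot T(|B_\ell|) = O(n \cdot T(n))$. During the recursive DFS from an $A$-vertex at level $k$, query the data structure for $B_{k+1}$ for a next-level neighbor $u$; whether the recursive extension through $u$ succeeds (so $u$ is consumed on an augmenting path) or fails (which, by Hopcroft--Karp's layered-graph analysis, means $u$ cannot help any later exploration in this phase), delete $u$ from the $B_{k+1}$ structure before continuing. Consequently each $B$-vertex is deleted at most once during the DFS, for another $O(n \cdot T(n))$ per phase. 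Multiplying by the $O(\sqrt{n})$ phases yields the claimed $O(n^{1.5} \cdot T(n))$ bound.

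The main obstacle is justifying the ``delete on first visit'' discipline---that is, verifying that once a $B$-vertex is removed from the working data structure it is truly unneeded for the remainder of the current phase. For the \BFS this follows from the uniqueness of shortest-path levels, and for the DFS it is the standard Hopcroft--Karp invariant that a vertex failing to extend through the layered subgraph cannot succeed later in the same phase. Everything else is bookkeeping: the hypothesized $O(m \cdot T(m))$ construction time and $O(T(m))$ per-operation bounds of $\DSX{B}$ plug directly into this accounting to give the stated running time.
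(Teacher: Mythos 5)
Your proof is correct and follows essentially the approach of Efrat, Itai, and Katz (which the paper cites rather than reproves): simulate each Hopcroft--Karp phase---the layered BFS and the augmenting DFS---using the oracle $\DSX{\cdot}$, charging each successful query to a permanent deletion and observing that the ``delete on first visit'' discipline is sound by the usual layered-graph invariants, so each phase costs $O(n \cdot T(n))$ and there are $O(\sqrt{n})$ phases. The only stylistic difference is your choice to rebuild separate per-layer structures for the DFS rather than reusing a single structure with level checks, but this is a clean way to ensure returned neighbors lie in the next layer and does not change the asymptotics.
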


Recently, Cabello and Mulzer \cite{cm-mcgig-20} also use a similar
framework as described above for computing minimum cuts in disk graphs
in $\tldO(n^{1.5})$ time. We show that the above framework can also be
extended to computing anti-chains, with a small modification to the
data structure requirements.

\begin{theorem}
    \thmlab{compute:ac:ds}%
    Let $\POX{\Vertices}$ be a poset, where $n =
    \cardin{\Vertices}$. For any subset $\PS \subseteq \Vertices$ of
    $m$ elements, suppose one can construct a data structure
    $\DSX{\PS}$ such that:
    \begin{compactenumi}
        \item Given a query $v \in V$, $\DSX{\PS}$ returns an element
        $u \in \PS$ with $v \ple u$ (or reports that no such element
        in $\PS$ exists) in $T(m)$ time.

        \item An element can be deleted from $\DSX{\PS}$ in $T(m)$
        time.

        \item $\DSX{\PS}$ can be constructed in
        $O\bigl(m \cdot T(m)\bigr)$ time.
    \end{compactenumi}
    Then one can compute the maximum size anti-chain for
    $\POX{\Vertices}$ in $O\bigl(n^{1.5} \cdot T(n)\bigr)$ time.
\end{theorem}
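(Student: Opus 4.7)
The plan is to follow the algorithm of \lemref{compute:ac}, but to replace the explicit Hopcroft-Karp matching computation with an implicit version that queries $\DSX{\cdot}$ on the fly, instead of materializing the (potentially) $\Theta(n^2)$-size edge set $\Edges$. This parallels the approach of Efrat \etal \cite{eik-ghbmrp-01} and Cabello and Mulzer \cite{cm-mcgig-20}, who turn a matching/flow procedure into an $\tldO(n^{1.5})$ algorithm by using a range-searching structure as an oracle for the edges.

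First I would set up the bipartite graph $\Graph = (\MX{\Vertices} \cup \PX{\Vertices}, \Edges)$ of \lemref{compute:ac}, but without ever listing $\Edges$. The only graph query that ever arises in Hopcroft-Karp has the form: ``given $\MX{v}$ and a current live subset $\PS \subseteq \PX{\Vertices}$, find some neighbor $\PX{u}\in\PS$,'' i.e., some $u \in \PS$ with $v \ple u$. This is exactly what $\DSX{\PS}$ answers in $T(m)$ time. Edges from $\PX{\Vertices}$ back to $\MX{\Vertices}$ are only traversed along matched edges (which are stored explicitly with the matching), so no reverse oracle is needed.

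Next I would implement the $O(\sqrt{n})$ phases of Hopcroft-Karp using the oracle as follows. Each phase consists of a layered \BFS from the unmatched vertices of $\MX{\Vertices}$, followed by a DFS that extracts a maximal set of vertex-disjoint shortest augmenting paths. At the start of the phase I rebuild $\DSX{\PX{\Vertices}}$ in $O(n\cdot T(n))$ time. Whenever the \BFS expands some $\MX{v}$, I repeatedly query $\DSX{\cdot}$ for a neighbor $\PX{u}$, append $\PX{u}$ to the next layer, and then delete $\PX{u}$ from the data structure so it is not rediscovered in the same phase. Since the total number of queries and deletions is $O(n)$, the \BFS costs $O(n\cdot T(n))$. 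For the DFS I (re)build the data structure restricted to the \PX-vertices that appear on the \BFS layers, and run the standard Hopcroft-Karp DFS, deleting a \PX-vertex from the oracle the moment it is matched or shown to admit no augmenting path through it; by the same accounting the DFS also costs $O(n\cdot T(n))$ per phase.

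Finally, once the maximum matching $\match$ is in hand, I extract the anti-chain exactly as in the last paragraph of \lemref{compute:ac}: one more implicit \BFS in the residual graph from the unmatched vertices of $\MX{\Vertices}$ produces the reachable set $S$, and $\ac = \Set{v\in\Vertices}{\MX{v}\in S,\ \PX{v}\notin S}$ is returned. Using $\DSX{\cdot}$ to expand forward $\MX{} \to \PX{}$ steps and the matching to take backward $\PX{} \to \MX{}$ steps, this last pass costs $O(n\cdot T(n))$. Summing the $O(\sqrt{n})$ phases yields the claimed $O(n^{1.5}\cdot T(n))$ bound. The main obstacle I expect is the bookkeeping around the ``delete'' operations: one must argue that the deletions inside a phase are local (the data structure is rebuilt at the start of each phase so that all edges are available again), and that restricting the DFS to the BFS DAG while deleting dead \PX-vertices on the fly still produces a maximal set of vertex-disjoint shortest augmenting paths, which is essentially the standard Hopcroft-Karp analysis reproduced in \cite{eik-ghbmrp-01,cm-mcgig-20}.
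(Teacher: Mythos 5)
Your proposal is correct and follows essentially the same route as the paper: run Hopcroft--Karp implicitly by using $\DSX{\cdot}$ as a neighbor oracle in each phase (deferring the phase-level bookkeeping to Efrat \etal and Cabello--Mulzer, exactly as the paper does), and then perform one more implicit alternating-path \BFS in the residual graph to recover the reachable set $S$ and output $\ac = \Set{v\in\Vertices}{\MX{v}\in S,\ \PX{v}\notin S}$. The only difference is that you spell out the layered \BFS/DFS structure of a Hopcroft--Karp phase in a bit more detail than the paper, which simply cites Efrat \etal for that part.
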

\begin{proof}
    Create the vertex set
    $\US = \MX{\Vertices} \cup \PX{\Vertices}$ of the bipartite
    graph $\Graph = (\US, \Edges)$ associated with
    $\POX{\Vertices}$. The neighborhood of a vertex in the bipartite
    graph can be found by constructing and querying the data structure
    $\DS$. Recall that in each iteration of the maximum matching
    algorithm of Hopcroft-Karp \cite{hk-ammbg-73}, a \BFS tree
    is computed in the residual network of $\Graph$. Such a tree
    can be computed in $O(n T(n))$, as can be easily verified (the \BFS
    algorithm is essentially described below). Furthermore, the
    algorithm terminates after $O(\sqrt{n})$ iterations, which implies
    that one can compute the maximum matching in $\Graph$ in
    $O(n^{1.5} \cdot T(n))$ time. See Efrat \etal \cite{eik-ghbmrp-01}
    for details. Let $\match$ be the matching computed.

    By \lemref{compute:ac}, computing the maximum anti-chain reduces
    to computing the set of vertices which can be reached by
    alternating paths originating from unmatched vertices in
    $\MX{\Vertices}$.  Call this set of vertices $S$, as in
    \lemref{compute:ac}.

    To compute $S$, we do a \BFS in the residual network of $\Graph$.
    To this end, build the data structure $\DSX{\PX{\Vertices}}$.
    Start at an arbitrary unmatched vertex $v \in \MX{\Vertices}$, add
    it to $S$, and query $\DSX{\PX{\Vertices}}$ to travel to a
    neighbor $u \in \PX{\Vertices}$ along an unmatched edge. Add $u$
    to $S$ and delete $u$ from $\DSX{\PX{\Vertices}}$. Travel back to
    a vertex $x$ in $\MX{\Vertices}$ using an edge of the matching
    $\match$ (if possible) and add $x$ to $S$. This process is
    iterated until the alternating path has been exhausted. Then,
    restart the search from $v$ (if $v$ has any remaining unmatched
    neighbors) or another unmatched vertex of $\MX{\Vertices}$.
    Observe that each vertex in $\PX{\Vertices}$ is inserted and
    deleted at most once from the data structure $\DS$. Furthermore,
    each query to $\DS$ can be charged to a vertex deletion. Hence,
    $S$ can be computed in $O(n \cdot T(n))$ time.

    Given $S$, in $O(n)$ time we can compute a maximum anti-chain
    $\ac = \Set{v \in \Vertices}{\MX{v} \in S, \PX{v} \not\in S}$.
\end{proof}


\section{Applications}


\subsection{Largest Pareto-optimal subset of points}

\begin{definition}
  Let $\PS$ be a set of points in $\R^d$. A point $\pa \in \R^d$
  dominates a point $\pb \in \R^d$ if $\pa \geq \pb$ coordinate wise.
  The point set $\PS$ is \emphi{Pareto-optimal} if no point in $\PS$
  dominates any other point in $\PS$.
\end{definition}

\begin{lemma}
    \lemlab{pareto}%
    Let $\PS \subset \R^d$ be a set of $n$ points. A Pareto-optimal
    subset of $\PS$ of maximum size can be computed in
    $O(n^{1.5} (\log n/\log\log n)^{d-1})$ time.
\end{lemma}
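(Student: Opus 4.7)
My plan is to instantiate \thmref{compute:ac:ds} on the natural poset $\POX{\PS}$ in which $\pb \ple \pa$ iff $\pa$ dominates $\pb$ coordinate-wise; an anti-chain in this poset is by definition a Pareto-optimal subset of $\PS$, and comparability of two points is trivially checkable in $O(d) = O(1)$ time. Therefore everything reduces to constructing, for any subset $\QS \subseteq \PS$ of size $m$, a data structure $\DSX{\QS}$ meeting the three requirements of \thmref{compute:ac:ds} with $T(m) = O\bigl((\log m / \log\log m)^{d-1}\bigr)$, whose query is: given $v \in \R^d$, return some $u \in \QS$ with $u \geq v$ coordinate-wise, or report that none exists.

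This query is a $d$-dimensional dominance emptiness/witness query against the orthant $[v_1,\infty) \times \cdots \times [v_d,\infty)$. I would implement $\DSX{\QS}$ as Chazelle's compressed multidimensional range tree, which in the static setting answers such queries in $O\bigl((\log m / \log \log m)^{d-1}\bigr)$ time after $O\bigl(m(\log m / \log \log m)^{d-1}\bigr)$ preprocessing. To support the deletions demanded by \thmref{compute:ac:ds}, I would augment every node with a counter for the number of surviving points in its subtree; deleting a point $u$ decrements counters along its $O(T(m))$ canonical locations in the tree, and queries use these counters to steer the descent into a non-empty canonical subproblem, reading off a witness in $O(1)$ additional time once a non-empty bucket is reached.

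Plugging $T(n) = O\bigl((\log n / \log \log n)^{d-1}\bigr)$ into \thmref{compute:ac:ds} then yields the claimed total running time $O\bigl(n^{1.5}(\log n/\log\log n)^{d-1}\bigr)$.

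The main obstacle is verifying the semi-dynamic bound: that Chazelle's structure retains $O\bigl((\log m / \log \log m)^{d-1}\bigr)$ time per query \emph{and} per deletion. This succeeds because the skeleton of the structure depends only on the coordinates of the initial point set and not on the live subset, so deletions can be handled by pure counter maintenance with no rebalancing, and the counters suffice to avoid visiting empty canonical subproblems during a query. For $d = 2$ simpler fractional-cascading variants already suffice, and the higher-dimensional case follows by the standard range-tree recursion across coordinates.
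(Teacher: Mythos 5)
Your overall reduction is identical to the paper's: build the dominance poset $\POX{\PS}$, observe a Pareto-optimal subset is exactly an anti-chain, invoke \thmref{compute:ac:ds}, and recognize the required primitive as a $d$-sided dominance (orthogonal range) witness query. The divergence is entirely in the choice of data structure, and this is where your proposal has a gap.

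The paper simply plugs in Chan and Tsakalidis's \emph{dynamic} orthogonal range searching structure \cite{ct-dorsrr-17}, which already supports deletions and queries in $O((\log n/\log\log n)^{d-1})$ amortized time and builds in $O(n\cdot T(n))$ time --- nothing further to verify. You instead propose taking Chazelle's \emph{static} compressed range tree and bolting on per-node survivor counters to support deletions. The idea is sound for an ordinary range tree (fixed skeleton, decrement counters along the point's canonical locations, steer queries away from empty nodes), but a plain range tree gives $\Theta(\log^{d-1} n)$, not $(\log n/\log\log n)^{d-1}$. The $\log\log n$ savings in Chazelle-style structures come from word-packing/compression tricks that do not obviously coexist with per-node counters: you would need to argue that (i) a live point still occupies only $O((\log n/\log\log n)^{d-1})$ canonical locations in the compressed structure, (ii) decrementing a counter inside a packed representation stays within that budget, and (iii) a witness can be extracted in $O(1)$ from a partially-deleted compressed bucket --- Chazelle's structure is designed for counting, and witness extraction after decay is not free. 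Likewise, your fallback remark that ``simpler fractional-cascading variants already suffice'' for $d=2$ is not right as stated: ordinary fractional cascading gives $O(\log n)$, not $O(\log n/\log\log n)$. None of this is unfixable, but you are reproving (a semi-dynamic special case of) a result that is already available off the shelf; citing the existing dynamic structure, as the paper does, closes the argument cleanly.
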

\begin{proof}
    Form the implicit poset $\POX{\PS}$ where $\pb \ple \pa$ $\iff$
    $\pa$ dominates $\pb$. Hence, two elements are incomparable when
    neither is dominated by the other. As such, computing the largest
    Pareto-optimal subset is reduced to finding the maximum anti-chain
    in $\POX{\PS}$.

    To apply \thmref{compute:ac:ds}, one needs to exhibit a data
    structure $\DS$ with the desired properties. For a given query
    $\pb$, finding a point $\pa \in \PS$ with $\pb \ple \pa$
    corresponds to finding a point $\pa$ which dominates $\pb$.
    Equivalently, such a point in $\PS$ exists if and only if it lies in
    the range $[\pb_1, \infty) \times \ldots \times [\pb_d, \infty)$.
    This is a $d$-sided orthogonal range query.  Chan and Tsakalidis's
    dynamic data structure for orthogonal range searching
    \cite{ct-dorsrr-17} suffices---their data structure can handle
    deletions and queries in $T(n) = O((\log n/\log\log n)^{d-1})$
    amortized time, and can be constructed in $O(n \cdot T(n))$ time.
\end{proof}

\subsubsection{Chain decomposition}
Let $\PS$ be a set of $n$ points in $\Re^d$. We would like to
decompose $\PS$ into disjoint chains of dominated points, such
that all the points are covered, and the number of chains is
minimum. By \remref{compute:chains}, this can be done by running
the algorithm \lemref{pareto}, and converting the bipartite
matching to chains.  This would take
$O(n^{1.5} (\log n/\log\log n)^{d-1})$ time.

\SaveIndent%
\smallskip%
\noindent%
\begin{minipage}{0.75\linewidth}
    \RestoreIndent%
    As a concrete example, suppose we want to solve a more restricted
    problem in the planar case---decomposing a given point set into
    chains of points (that are monotone in both $x$ and $y$), such
    that no pair of chains intersect.

    Suppose that we have computed a chain decomposition of the points
    $\PS$. Let $C_1$ and $C_2$ be two chains of points, each with an
    edge $\pa_i \pb_i$ in $C_i$ (and $\pa_i$ dominates $\pb_i$) for $i
    = 1, 2$ such that $\pa_1 \pb_1$ and $\pa_2 \pb_2$ intersect in the
    plane.  An exchange argument shows that by deleting these two
    edges and adding the edges $\pa_1 \pb_2$ to $C_1$ and $\pa_2
    \pb_1$ to $C_2$, we decrease the total length of the chains.
    Indeed, let $\ps$ be the intersection point of the edges $\pa_1
    \pb_1$ and $\pa_2 \pb_2$. By the triangle inequality and assuming
    the points of $\PS$ are in general position,
\end{minipage}\hfill
\begin{minipage}{0.25\linewidth}
    \hfill%
    \includegraphics{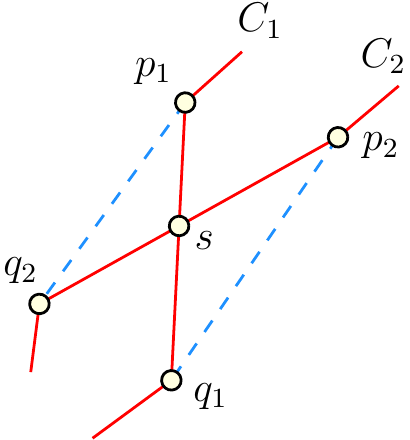}
\end{minipage}%

\begin{align*}
  \| \pa_1 - \pb_2 \| + \| \pa_2 - \pb_1 \|
  < \| \pa_1 - \ps \| + \| \ps - \pb_2 \| + \| \pa_2 - \ps \| + \| \ps - \pb_1 \|
  = \| \pa_1 - \pb_1 \| + \| \pa_2 - \pb_2 \|.
\end{align*}

As such, suppose we assign a weight to each edge in the bipartite
graph $\Graph$ equal to the distance between the two corresponding
points. If $k$ is the size of the maximum (unweighted) matching in
$\Graph$, then we can compute a matching of cardinality $k$ with
minimum weight by solving a min-cost flow instance on $\Graph$ (using
the weights on the edges of $\Graph$ as the costs). This implies that
the resulting chain decomposition covers all points of $\PS$, and no
pair of chains intersect. We obtain the following.

\begin{lemma}
    Let $\PS$ be a set of $n$ points in the plane in general position.
    In polynomial time, one can compute the minimum number of
    \underline{non-intersecting} $(x,y)$-monotone polygonal curves
    covering the points of $\PS$, where every point of $\PS$ must be a
    vertex of one of these polygonal curves, and the vertices of the
    polygonal curves are points of $\PS$.
\end{lemma}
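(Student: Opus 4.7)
The plan is to combine Dilworth's theorem (via the matching-to-chain-cover correspondence of \remref{compute:chains}) with a weighted refinement of the matching, and then use the exchange inequality in the display above to turn minimum weight into non-crossing.

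First I would set up the poset exactly as in \lemref{pareto}: let $\POX{\PS}$ be defined by $\pb \ple \pa$ iff $\pa$ dominates $\pb$. A chain in this poset is a sequence of points $\pb_1 \ple \pb_2 \ple \cdots \ple \pb_k$, and the polygonal curve obtained by connecting consecutive points of the chain is, by definition, $(x,y)$-monotone and has its vertices in $\PS$. By \remref{compute:chains}, the bipartite matching $\match$ constructed in \lemref{compute:ac} yields a chain cover of $\PS$ of size exactly $n - \cardin{\match}$, and this is the minimum possible number of chains. So any maximum-cardinality matching in $\Graph$ already produces the minimum number of $(x,y)$-monotone polygonal curves covering $\PS$; what remains is to choose such a matching so that the curves are pairwise non-intersecting.

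Next I would assign to each edge $(\MX{u}, \PX{v})$ of $\Graph$ the weight $\edistY{u}{v}$, i.e.\ the Euclidean distance between the corresponding points, and compute a matching of maximum cardinality and minimum total weight. This is a standard min-cost max-flow problem on $\Graph$ (with unit capacities on source/sink edges and edge costs equal to these weights), and can be solved in polynomial time. Let $\match^\star$ be the resulting matching, and let $\CC^\star$ be the induced chain cover; by the previous paragraph, $\cardin{\CC^\star}$ is the minimum over all chain decompositions.

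Finally I would argue that no two of the polygonal curves in $\CC^\star$ cross. Suppose for contradiction that two chains $C_1, C_2 \in \CC^\star$ contain matched edges $\pa_1\pb_1$ and $\pa_2\pb_2$ (with $\pb_i \ple \pa_i$) whose segments cross at some point $\ps$. Since $\pa_1$ dominates $\pb_1$ and $\pa_2$ dominates $\pb_2$ and the four points are in general position, the crossing forces $\pa_1$ to dominate $\pb_2$ and $\pa_2$ to dominate $\pb_1$; hence the swapped edges $(\MX{\pb_2},\PX{\pa_1})$ and $(\MX{\pb_1},\PX{\pa_2})$ exist in $\Graph$. Deleting the two original matching edges and inserting the two swapped ones yields another matching of the same cardinality, whose total weight strictly decreases by the displayed triangle inequality
\[
   \| \pa_1 - \pb_2 \| + \| \pa_2 - \pb_1 \|
   < \| \pa_1 - \pb_1 \| + \| \pa_2 - \pb_2 \|,
\]
contradicting minimality of $\match^\star$. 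Therefore $\CC^\star$ is a non-crossing chain decomposition of minimum size, proving the lemma.

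The main obstacle is the last step: one has to be careful that the swap really produces a legal matching, i.e.\ that the swapped edges correspond to genuine comparabilities in $\POX{\PS}$, and that the swap does not accidentally merge or break chains in a way that changes the cardinality or creates a different obstruction. Both points follow from the geometric observation that a crossing of two $(x,y)$-monotone segments forces the dominance relations to hold across chains, so the swap is well-defined; once that is justified, the inequality above closes the argument.
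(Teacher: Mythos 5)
Your proof is correct and follows essentially the same route as the paper: form the dominance poset, use \remref{compute:chains} to get a minimum-size chain cover from a maximum matching, and then select the minimum-weight matching (via min-cost max-flow with Euclidean edge costs) so that the exchange/triangle-inequality argument rules out any crossing. The extra care you take in verifying that a crossing of two $(x,y)$-monotone matched segments forces the cross-dominance relations $\pb_1 \ple \pa_2$ and $\pb_2 \ple \pa_1$ (so the swapped edges are genuinely present in $\Graph$) is a correct and worthwhile elaboration of a step the paper leaves implicit.
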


\begin{remark}
  If we do not require the collection of polygonal curves to be
  non-intersecting, there is a much simpler algorithm. Create a
  directed acyclic graph $H = (P, E)$, where $(p,q) \in E$
  if $p$ dominates $q$. Observe that all of the points $S \subseteq P$
  in $H$ with out-degree zero form a polygonal curve. We add this curve
  to our collection and recursively compute the set of curves on the
  residual graph $H \setminus S$. While the number of polygonal
  curves is minimum, the resulting curves may intersect.
\end{remark}


\subsection{Largest loose subset of regions}

\begin{definition}
  Let $\Disks$ be a collection of $n$ regions in $\R^d$.  Such a
  collection $\Disks$ is \emph{loose} if no region of $\Disks$ is
  fully contained inside another region of $\Disks$.
\end{definition}

\begin{lemma}
  \lemlab{loose:any}
  Let $\Disks$ be a collection of $n$ regions in $\R^d$. Suppose
  that for any two regions in $\Disks$, we can test if one
  is contained inside the other in $O(1)$ time. Then the largest
  loose subset of $\Disks$ can be computed in $O(n^{2.5})$ time.
\end{lemma}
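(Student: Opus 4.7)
The plan is to reduce the problem directly to computing a maximum anti-chain via \lemref{compute:ac}. Define the implicit poset $\POX{\Disks}$ by declaring $\disk_1 \ple \disk_2$ whenever $\disk_1 \subseteq \disk_2$. Set-containment is reflexive, transitive, and antisymmetric (breaking ties between duplicate regions by an arbitrary total order, which does not affect the optimum), so this is a legitimate partial order on $\Disks$. By construction, two regions are incomparable in $\POX{\Disks}$ precisely when neither contains the other, so the anti-chains of $\POX{\Disks}$ are exactly the loose subsets of $\Disks$.

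Next, I would verify the hypothesis of \lemref{compute:ac}: comparability of $\disk_1,\disk_2 \in \Disks$ amounts to testing whether $\disk_1 \subseteq \disk_2$ or $\disk_2 \subseteq \disk_1$, which by assumption can be done in $O(1)$ time. Invoking \lemref{compute:ac} then yields a maximum anti-chain of $\POX{\Disks}$, and hence a maximum loose subset of $\Disks$, in $O(n^{2.5})$ time.

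There is essentially no obstacle here beyond bookkeeping; the only mildly delicate point is handling duplicate regions, which can be resolved by any fixed tie-breaking rule since a loose subset can contain at most one representative of each equivalence class of coincident regions. This same reduction will be the template for the more refined result \lemref{loose:ds}, where the generic $O(n^{2.5})$ bound of \lemref{compute:ac} is replaced by the faster bound of \thmref{compute:ac:ds} using a suitable dynamic range searching structure for containment queries.
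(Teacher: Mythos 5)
Your proof matches the paper's: both form the poset $\POX{\Disks}$ ordered by containment, observe that the anti-chains of this poset are exactly the loose subsets, and invoke \lemref{compute:ac} together with the $O(1)$-time comparability assumption to conclude the $O(n^{2.5})$ bound. The only cosmetic difference is the treatment of coincident regions---the paper orders by containment in the \emph{interior} (which automatically breaks antisymmetry for duplicates), while you use closed containment plus an explicit tie-breaking rule; either convention works.
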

\begin{proof}
  Form the implicit poset $\POX{\Disks}$, where $\disk' \ple \disk$
  $\iff$ the region $\disk$ is contained in the interior of $\disk'$.
  In particular, a subset of regions are loose if and only if
  they form an anti-chain in $\POX{\Disks}$.

  By \lemref{compute:ac} (and the assumption that containment
  of objects can be tested in $O(1)$ time) the largest anti-chain,
  and thus the largest loose subset, can be computed in $O(n^{2.5})$
  time.
\end{proof}

\begin{lemma}
  \lemlab{loose:ds}
  Let $\Disks$ be a collection of $n$ regions in $\R^d$.
  For any subset $R \subseteq \Disks$ of
  $m$ regions, suppose one can construct a data structure
  $\DSX{R}$ such that:
  \begin{compactenumi}
      \item Given a query $\disk \in \Disks$, $\DSX{R}$ returns a
      region $\disk' \in R$ with $\disk' \subseteq \disk$
      (or reports that no such region in $R$ exists)
      in $Q(m)$ time.

      \item A region can be deleted from $\DSX{R}$ in $Q(m)$
      time.

      \item $\DSX{R}$ can be constructed in
      $O\bigl(m \cdot Q(m)\bigr)$ time.
  \end{compactenumi}
  Then one can compute the largest loose subset of $\Disks$ in
  $O\bigl(n^{1.5} \cdot Q(n)\bigr)$ time.
\end{lemma}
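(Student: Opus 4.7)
The plan is to apply \thmref{compute:ac:ds} to the implicit poset introduced in the proof of \lemref{loose:any}, and the work consists almost entirely of lining up the hypotheses. First I would set up the same partial order on $\Disks$: declare $\disk' \ple \disk$ iff $\disk \subseteq \disk'$, so that a subset of $\Disks$ is loose precisely when it is an anti-chain of $\POX{\Disks}$. Consequently, computing the largest loose subset is equivalent to computing a maximum anti-chain of $\POX{\Disks}$.

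Next I would verify that the hypothesized data structure $\DSX{R}$ satisfies the three conditions required by \thmref{compute:ac:ds} with $T(m) = Q(m)$. The one bookkeeping point is to unfold the poset relation in the theorem's query specification: given a query $v = \disk$, the theorem asks for some $u = \disk' \in R$ with $v \ple u$, which by our choice of $\ple$ translates to $\disk' \subseteq \disk$---which is exactly the query supported by $\DSX{R}$. The deletion and construction conditions carry over verbatim. Invoking \thmref{compute:ac:ds} then produces a maximum anti-chain of $\POX{\Disks}$---equivalently, a maximum loose subset of $\Disks$---in $O(n^{1.5} \cdot Q(n))$ time.

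The argument is essentially a black-box reduction, so there is no real conceptual obstacle. The only thing to be careful about is the direction of $\ple$: in the opposite convention, $\DSX{R}$ would answer the wrong query and an auxiliary symmetric data structure would be needed. In the convention of \lemref{loose:any} the two directions already align, so the reduction goes through immediately.
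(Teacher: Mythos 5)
Your proof is correct and matches the paper's approach exactly: both form the containment poset from \lemref{loose:any} and then apply \thmref{compute:ac:ds} as a black box with $T(m) = Q(m)$, after noting that the query $v \ple u$ unfolds to exactly the containment query $\disk' \subseteq \disk$ supported by $\DSX{R}$. The careful check of the direction of $\ple$ is a worthwhile detail that the paper leaves implicit.
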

\begin{proof}
  The proof follows by considering the poset $\POX{\Disks}$
  described in \lemref{loose:any} and applying
  \thmref{compute:ac:ds} using the data structure $\DS$.
\end{proof}

\subsubsection{Largest loose subset of disks}

We show how to compute the largest loose subset when the regions
are disks in the plane. To apply \lemref{loose:ds}, we need to
exhibit the required dynamic data structure $\DS$.

\begin{lemma}
    \lemlab{disks:ds}%
    Let $\Disks$ be a set of $n$ disks in the plane.  There is a
    dynamic data structure $\DS$, which given a query disk $\diskB$,
    can return a disk $\disk' \in \Disks$ such that
    $\disk' \subseteq \diskB$ (or report that so such disk exists) in
    $O(\log^2 n)$ deterministic time. Insertion and deletion of disks
    cost $O(\log^{10 + \eps} n)$ amortized expected time, for all
    $\eps > 0$.
\end{lemma}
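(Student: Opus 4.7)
The plan is to linearize the disk-containment relation via a lifting to $\R^4$, and then apply a dynamic halfspace emptiness/witness data structure with an auxiliary radius constraint.

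\textbf{Lifting.} Map a disk $\disk' = (c', r')$ to the point
\[
\phi(\disk') = \bigl( c'_x,\; c'_y,\; r',\; \|c'\|^2 - r'^2 \bigr) \in \R^4.
\]
Starting from the containment condition $\disk' \subseteq \diskB \iff \|c - c'\| + r' \leq r$, squaring and expanding gives
\[
-2 c_x c'_x - 2 c_y c'_y + 2 r r' + \bigl(\|c'\|^2 - r'^2\bigr) \;\leq\; r^2 - \|c\|^2,
\]
which is linear in $\phi(\disk')$ with coefficient vector $(-2c_x, -2c_y, 2r, 1)$ and threshold $r^2 - \|c\|^2$ depending only on $\diskB$. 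Because squaring also admits the spurious case $\diskB \subseteq \disk'$, one must separately enforce $r' \leq r$.

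\textbf{Data structure.} We maintain $\phi(\Disks) \subset \R^4$ in a dynamic halfspace emptiness/witness structure for points in $\R^4$, e.g.\ built from Chan's dynamic convex-hull machinery, which supports halfspace queries in polylogarithmic time with polylogarithmic amortized expected update cost. To impose $r' \leq r$, we overlay a weight-balanced BST keyed by $r'$ on top of this structure: each node of the BST stores its own halfspace structure on the $\phi$-images of the disks in its subtree. A query with $\diskB$ descends the BST to isolate the $O(\log n)$ canonical subtrees whose union is exactly $\{\disk' \in \Disks : r' \leq r\}$, performs one halfspace query at each, and returns the first witness.

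\textbf{Complexity and main obstacle.} Each disk is stored in $O(\log n)$ canonical subsets, so an insertion or deletion propagates to $O(\log n)$ halfspace structures; multiplying by the per-structure update cost of the 4D halfspace data structure yields the claimed $O(\log^{10+\eps} n)$ amortized expected bound. The query cost of $O(\log^2 n)$ follows from visiting $O(\log n)$ canonical BST nodes and spending polylogarithmic time on the halfspace query at each. The main obstacle is the careful accounting of polylogarithmic exponents: one must identify the specific dynamic halfspace emptiness structure in $\R^4$ whose per-operation cost, combined with the extra $O(\log n)$ factor from the outer BST layer, lands exactly on the stated $O(\log^{10+\eps} n)$ amortized expected update time while still keeping the query deterministic at $O(\log^2 n)$.
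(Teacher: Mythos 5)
The paper takes a different and cleaner route than yours, and your approach as written has a genuine gap.

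The paper avoids squaring altogether. It uses the observation that $\disk' \subseteq \diskB$ iff $\|\cen_{\disk'} - \cen_\diskB\| + \rad_{\disk'} \leq \rad_\diskB$, i.e.\ the query is an \emph{additively weighted nearest-neighbor} query with weight $\rad_{\disk'}$ at $\cen_{\disk'}$. This is a lower envelope of $xy$-monotone cones over $\R^2$ (a 3D problem), and Kaplan, Mulzer, Roditty, Seiferth, and Sharir give a dynamic structure for exactly this with $O(\log^2 n)$ query time and $O(2^{O(\alpha(\log n)^2)}\log^{10} n)$ amortized expected updates. No side constraint $r' \leq r$ is needed because the unsquared condition has no spurious branch, so no outer range tree is needed.

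Your proof has a concrete gap in the central step: there is no known dynamic halfspace emptiness/witness data structure for points in $\R^4$ with polylogarithmic query and update time, and ``Chan's dynamic convex-hull machinery'' does not supply one --- that machinery lives in $\R^2$/$\R^3$. In $\R^4$ the convex hull has worst-case complexity $\Theta(n^2)$, and with near-linear space the best known halfspace emptiness query time is polynomial (roughly $\tldO(n^{1/2})$). So the ``data structure'' step of your construction does not exist with the bounds you invoke, and the complexity paragraph is asking to match a target that the inner structure cannot reach. (There is a potential repair: the lifted points lie on a 3D manifold in $\R^4$ and your query halfspaces all have fourth coefficient equal to $1$, so the problem can be reinterpreted as point location in a lower envelope of planes in $\R^3$ --- a dynamic 3D convex hull --- behind a 1D range tree over $r'$. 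That is a legitimate route, but it is not what you wrote, and it still carries an extra $\log n$ factor from the range tree that the paper's unsquared formulation avoids.)
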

\begin{proof}
    Associate each disk $\disk \in \Disks$, which has center
    $\cen_\disk$ and radius $\rad_\disk$, with a weighted distance
    function $\wdistX{\disk} : \R^2 \to \R$, where
    $\wdistY{\disk}{\pa} = \edistY{\cen_\disk}{\pa} + \rad_\disk$.
    Observe that a disk $\disk$ is contained inside the interior of a
    disk $\diskB$ if and only if
    $\wdistY{\disk}{\cen_\diskB} \leq \rad_\diskB$.  For a query disk
    $\diskB$, our goal will be to compute
    $\arg\min_{\disk \in \Disks} \wdistY{\disk}{\cen_\diskB}$. After
    finding such a disk $\disk'$, return that
    $\disk' \subseteq \diskB$ if and only if
    $\wdistY{\disk'}{\cen_\diskB} \leq \rad_\diskB$.

    Hence, the problem is reduced to dynamically maintaining the
    function $F(\pa) = \min_{\disk \in \Disks} \wdistY{\disk}{\pa}$,
    for all $\pa \in \R^2$, under insertions and deletions of disks.
    Equivalently, $F$ is also the lower envelope of the $xy$-monotone
    surfaces defined by $\Set{\wdistX{\disk}}{\disk \in \Disks}$ in
    $\R^3$.  This problem was studied by Kaplan \etal
    \cite{kmrss-dpvdgf-17}: They prove that if $F$ is defined by a
    collection of additively weighted Euclidean distance functions,
    then $F$ can be computed for a given query $\pa$ in $O(\log^2 n)$
    time.  Furthermore, updates can be handled in
    $O\big(2^{O(\alpha(\log n)^2)} \log^{10} n \big)$ time, where
    $\alpha(n)$ is the inverse Ackermann function.
\end{proof}

\begin{corollary}
    \corlab{loose:disks}
    Let $\Disks$ be a set of $n$ disks in the plane. The largest
    loose subset of disks can be computed in
    $O(n^{1.5} \log^{10 + \eps} n)$ expected time, for all $\eps > 0$.
\end{corollary}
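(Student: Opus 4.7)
The plan is to invoke \lemref{loose:ds} with the dynamic data structure from \lemref{disks:ds} playing the role of $\DS$. The framework of \lemref{loose:ds} reduces the problem to providing, for any subcollection $R \subseteq \Disks$, a data structure supporting (i) a containment query that, given a query disk $\diskB \in \Disks$, returns some $\disk' \in R$ with $\disk' \subseteq \diskB$; (ii) deletion of disks; and (iii) construction within an $O(m \cdot Q(m))$ budget, where $Q(m)$ bounds the query and deletion cost.

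First I would verify that \lemref{disks:ds} supplies exactly what \lemref{loose:ds} asks for. The query operation is precisely the containment query of the lemma, realized via the minimization of the additively weighted distance function $\wdistY{\disk}{\cen_\diskB} = \edistY{\cen_\disk}{\cen_\diskB} + \rad_\disk$, followed by the comparison with $\rad_\diskB$. The deletion operation is directly supported by the Kaplan \etal\ structure in $O(\log^{10+\eps} n)$ amortized expected time. For the construction requirement, I would build $\DSX{R}$ by starting from an empty structure and performing $m$ insertions; since each insertion also costs $O(\log^{10+\eps} m)$ amortized expected time, the total construction time is $O\bigl(m \log^{10+\eps} m\bigr) = O\bigl(m \cdot Q(m)\bigr)$, as required.

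Taking $Q(n) = O(\log^{10 + \eps} n)$ (which dominates the $O(\log^2 n)$ query cost), \lemref{loose:ds} then immediately yields a running time of
\[
    O\bigl(n^{1.5} \cdot Q(n)\bigr) = O\bigl(n^{1.5} \log^{10 + \eps} n\bigr)
\]
in expectation, matching the statement of the corollary. I do not anticipate a genuine obstacle here: this is an assembly step whose only subtlety is confirming that the expected amortized bounds on insertions and deletions remain valid when interleaved with the Hopcroft--Karp driven sequence of operations implicit in \thmref{compute:ac:ds}, which is immediate since each disk is inserted and deleted $O(1)$ times across the execution.
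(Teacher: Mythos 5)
Your proposal is correct and matches the paper's proof exactly: both simply combine \lemref{loose:ds} with the disk data structure of \lemref{disks:ds}, with $Q(n) = O(\log^{10+\eps} n)$. The extra detail you supply about verifying the construction-time requirement via $m$ insertions is a reasonable elaboration, not a deviation.
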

\begin{proof}
    Follows from \lemref{loose:ds} in conjunction with the data
    structure described in \lemref{disks:ds}.
\end{proof}

\begin{remark}
    By \remref{compute:chains}, one can decompose a given set of $n$
    disks into the minimum number of disjoint \emph{towers}, where
    each tower is a sequence of disks of the form
    $d_1 \subseteq d_2 \subseteq \cdots \subseteq d_t$.  The resulting
    running time is as stated in \corref{loose:disks}.
\end{remark}


\subsection{Largest subset of non-crossing rectangles}

\begin{definition}
  A collection $\Rects$ of axis-aligned rectangles are non-crossing if
  the boundaries of every pair of rectangles in $\Rects$ intersect at
  most twice.
\end{definition}

\begin{lemma}
    \lemlab{noncrossing}
    Let $\Rects$ be a set of $n$ axis-aligned rectangles in the plane.
    A non-crossing subset of $\Rects$ of maximum size can be computed in
    $O(n^{1.5} (\log n/\log \log n)^3)$ time.
\end{lemma}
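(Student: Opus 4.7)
The plan is to express non-crossing as the anti-chain property of a suitable partial order on $\Rects$, and then apply \thmref{compute:ac:ds} with a four-dimensional dynamic range searching structure.

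For a rectangle $\Rect = [a,b] \times [c,d]$, let $I_x(\Rect) = [a,b]$ and $I_y(\Rect) = [c,d]$. Define the poset $\POX{\Rects}$ by $\Rect \ple \Rect'$ iff $I_x(\Rect) \subseteq I_x(\Rect')$ and $I_y(\Rect') \subseteq I_y(\Rect)$. Two axis-aligned rectangles have boundaries that cross more than twice exactly when one of them is strictly thinner in one axis and strictly taller in the other---the classical ``plus'' configuration---which (under general position of the coordinates) is precisely comparability in $\POX{\Rects}$. Hence a subset of $\Rects$ is non-crossing iff it is an anti-chain in this poset, and the problem reduces to computing a maximum anti-chain.

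To invoke \thmref{compute:ac:ds}, I need a dynamic data structure which, on query $\Rect = [a,b] \times [c,d]$, reports some $\Rect' = [a',b'] \times [c',d'] \in \Rects$ with $\Rect \ple \Rect'$, i.e., with $a' \leq a$, $b' \geq b$, $c' \geq c$, and $d' \leq d$. Mapping each $\Rect'$ to the point $(-a',\, b',\, c',\, -d') \in \Re^4$ and the query $\Rect$ to $(-a,\, b,\, c,\, -d)$, this becomes a standard four-dimensional orthogonal dominance query. The dynamic orthogonal range searching data structure of Chan and Tsakalidis \cite{ct-dorsrr-17} supports such queries as well as deletions in $T(n) = O((\log n/\log\log n)^3)$ amortized time, and can be built in $O(n \cdot T(n))$ time. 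Plugging this into \thmref{compute:ac:ds} yields the claimed running time of $O(n^{1.5}(\log n/\log\log n)^3)$.

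The only delicate step is verifying the equivalence between the geometric notion of crossing and comparability in $\POX{\Rects}$; under general position this is a short case analysis on the relative nesting of the $x$- and $y$-projections, and degenerate coordinates can be broken by standard symbolic perturbation. Everything else is a direct reduction to the framework already established.
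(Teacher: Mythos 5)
Your proof is correct and takes essentially the same approach as the paper: you reduce non-crossing to the anti-chain property under the ``thinner-but-taller'' partial order, reduce the data-structure requirement of \thmref{compute:ac:ds} to a four-dimensional orthogonal dominance query, and plug in the Chan--Tsakalidis structure. The only cosmetic difference is that your $\ple$ is the dual of the paper's (you write $\Rect \ple \Rect'$ where the paper writes $\Rect' \ple \Rect$), which is immaterial for anti-chains.
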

\begin{proof}
    For each rectangle $\Rect \in \Rects$, let $\TX{\Rect}$ and
    $\BX{\Rect}$ denote the $y$-coordinate of the top and bottom sides
    of $\Rect$, respectively. Similarly, $\LX{\Rect}$ and $\RX{\Rect}$
    denotes the $x$-coordinate for the left and right sides of $R$.

    Form the poset $\POX{\Rects}$ where
    \begin{align*}
      \Rect' \ple \Rect
      \iff
      [\LX{\Rect}, \RX{\Rect}] \subseteq [\LX{\Rect'}, \RX{\Rect'}]
      \text{ and }
      [\BX{\Rect'}, \TX{\Rect'}] \subseteq [\BX{\Rect}, \TX{\Rect}].
    \end{align*}
    Observe two rectangles $\Rect$ and $\Rect'$ are incomparable if
    and only if the boundaries of $\Rect$ and $\Rect'$ intersect at
    most twice. In particular, the largest subset of non-crossing
    rectangles corresponds to the largest anti-chain in
    $\POX{\Rects}$.

    To apply \thmref{compute:ac:ds}, we need a dynamic data structure
    which, given a rectangle $\Rect \in \Rects$, returns any rectangle
    in $\Rect' \in \Rects$ where $\Rect \ple \Rect'$. Equivalently, we
    want to return a rectangle $\Rect' \in \Rects$ such that
    $[\LX{\Rect'}, \RX{\Rect'}] \subseteq [\LX{\Rect}, \RX{\Rect}]$
    and
    $[\BX{\Rect}, \TX{\Rect}] \subseteq [\BX{\Rect'}, \TX{\Rect'}]$.
    To do so, map each rectangle $\Rect \in \Rects$ to the point
    $(\LX{\Rect}, \RX{\Rect}, \TX{\Rect}, \BX{\Rect}) \in \R^4$. The
    query of interest reduces to a 4-sided orthogonal range query in
    $\R^4$. Chan and Tsakalidis's dynamic data structure for orthogonal
    range searching \cite{ct-dorsrr-17} supports such queries and
    updates in time $O((\log n/\log\log n)^3)$, implying the result.
\end{proof}


\subsection{Largest subset of isolated points}

Let $\Lines$ be a set of $n$ lines in the plane. We assume that no
line in $\Lines$ is vertical and $\Lines$ may not necessarily
be in general position. Let $\PS$ be a set of $n$ points lying on the
lines of $\Lines$.

\begin{definition}
  Given a set of lines $\Lines$ and points $\PS$ lying on $\Lines$,
  a $\pa \in \PS$ can reach a point $\pb \in \PS$ if it possible for
  $\pa$ to reach $\pb$ by traveling from left to right along lines in
  $\Lines$.
  The set $\PS$ is isolated if no point in $\PS$ can reach another
  point in $\PS$.
\end{definition}

\paragraph*{The partial ordering.} Fix the collection of
lines $\Lines$. Given $\PS$, create the poset $\POX{\PS}$,
where $\pa \ple \pb$ $\iff$ $\pa$ can reach $\pb$ using the lines of
$\Lines$. Observe that any subset of isolated points directly
corresponds to an anti-chain in $\POX{\PS}$.

\begin{lemma}
  \lemlab{isolated}
  Let $\PS$ be a collection of $n$ points in the plane
  lying on a set $\Lines$ of $n$ lines. The largest subset of
  isolated points can be computed in $O(n^3)$ time.
\end{lemma}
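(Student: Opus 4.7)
The plan is to compute, in $O(n^3)$ time, the full comparability table of $\POX{\PS}$, and then invoke \lemref{compute:ac} to extract a maximum anti-chain in $O(n^{2.5})$ time. The dominant cost will be the comparability computation.

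\textbf{Step 1: Reduce reachability to a planar DAG.} Consider the arrangement $\ArrX{\Lines}$. Build a graph $H$ whose vertices are the points of $\PS$ together with all pairwise intersections of lines in $\Lines$; this gives $O(n^2)$ vertices. On each line $\ell \in \Lines$, sort the vertices on $\ell$ from left to right and add a directed edge from each vertex to the next. Since each line contributes $O(n)$ such edges, $H$ has $O(n^2)$ edges. By the definition of reachability, a point $\pa \in \PS$ can reach $\pb \in \PS$ (using $\Lines$ from left to right) if and only if there is a directed path from $\pa$ to $\pb$ in $H$, so $H$ is acyclic and faithfully represents $\ple$ restricted to $\PS$.

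\textbf{Step 2: Fill in the comparability table.} For each source $\pa \in \PS$, run a graph search (BFS or DFS) from $\pa$ in $H$ and mark every $\pb \in \PS$ encountered; this records all $\pb$ with $\pa \ple \pb$. Each search costs $O(\cardin{V(H)} + \cardin{E(H)}) = O(n^2)$, and there are $n$ sources, for a total of $O(n^3)$. Constructing $H$ itself takes $O(n^2 \log n)$ time (computing and sorting the intersections on each line), which is subsumed. The result is a lookup table supporting $O(1)$ comparability queries in $\POX{\PS}$.

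\textbf{Step 3: Invoke the anti-chain algorithm.} With $O(1)$ comparability at hand, \lemref{compute:ac} computes a maximum anti-chain of $\POX{\PS}$, i.e., a largest isolated subset of $\PS$, in $O(n^{2.5})$ time. The total running time is dominated by Step~2 and is $O(n^3)$, as claimed.

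The main obstacle is Step~1: one must verify that routing reachability through a DAG on the combined set of points of $\PS$ and line-line intersections really does capture the ``travel left-to-right along lines, switching lines only at common points'' relation, and that this can be done without blowing up beyond $O(n^2)$ vertices and edges even when $\Lines$ is not in general position (so several lines may cross at one point); fortunately, coincident crossings only reduce the size of $H$ and do not affect correctness of the sort-along-each-line construction.
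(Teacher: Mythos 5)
Your proposal is correct and uses essentially the same strategy as the paper: compute the arrangement $\ArrX{\Lines}$, build a directed left-to-right graph on it, run a graph search from each point of $\PS$ in $O(n^2)$ per source ($O(n^3)$ total) to fill in a comparability table for $\POX{\PS}$, and then invoke \lemref{compute:ac}. The one real difference is the vertex set: the paper first shifts each point of $\PS$ rightward along its line until it lands on an arrangement vertex, so that $\PS$ becomes a subset of the vertices of $\ArrX{\Lines}$, whereas you add the points of $\PS$ directly to the vertex set of $H$ alongside the pairwise line intersections. Your construction is slightly cleaner and in fact sidesteps a subtle pitfall of the shift-based version: if two distinct points of $\PS$ lying on different lines happen to shift to the same intersection vertex, they become identified in the graph and are then declared mutually reachable, even though they can be incomparable in the original reachability poset (neither can reach the other because reaching the shared intersection requires moving strictly right past both original locations). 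Keeping the points of $\PS$ as separate, unmoved vertices, as you do, preserves the relation exactly, and your observation that degenerate (coincident) crossings only shrink $H$ without harming the sort-along-each-line construction is correct.
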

\begin{proof}
  We can assume that every point of $\PS$ lies on at least two
  lines of $\Lines$. If not, shift such a point $\pa$ to the
  right along the line it lies on, until $\pa$ encounters
  an intersection.

  Start by computing the arrangement $\ArrX{\Lines}$ of the lines
  $\Lines$. Next, construct a directed graph $\Graph$ with vertex set
  equal to the vertices of $\ArrX{\Lines}$. By assumption, $\PS$ is a
  subset of the vertices of $\Graph$. The edges of $\Graph$ consist
  of the edges of the arrangement $\ArrX{\Lines}$ (any edges of
  $\ArrX{\Lines}$ which are half-lines are ignored). For each edge of
  $\ArrX{\Lines}$ with endpoints $\pu, \pv$, we direct the edge in
  $\Graph$ from $\pu$ to $\pv$ when $\pu$ has a smaller $x$-coordinate
  than $\pv$.  Next, for each $\pa \in \PS$, determine the set of
  points of $\PS$ reachable from $\pa$ by performing a \BFS
  in $\Graph$. Thus, given any two points, we can determine if
  they are comparable in $O(1)$ time. Apply \lemref{compute:ac} to
  obtain the largest isolated subset.

  To analyze the running time, note that computing the arrangement
  $\ArrX{\Lines}$ and constructing $\Graph$ can be done in $O(n^2)$
  time.  A \BFS from $n$ points in $\Graph$ costs $O(n^3)$ time total.
  Finally, the largest isolated subset can be found in $O(n^{2.5})$
  time by \lemref{compute:ac}.
\end{proof}




\newpage%
\BibTexMode{%
   \bibliographystyle{alpha}%
   \bibliography{antichain}%
}

\BibLatexMode{\printbibliography}


\end{document}